\theoremstyle{definition}
\newtheorem{proposition}{Proposition}
\begin{document}

\title{Initial Access Optimization for RIS-assisted Millimeter Wave Wireless Networks}

\author{{Charbel Bou Chaaya}\textsuperscript{1}, {Mohamad Assaad}\textsuperscript{2} and {Tijani Chahed}\textsuperscript{1}\\
\textsuperscript{1} {\normalsize\textit{Institut Polytechnique de Paris, Télécom SudParis, 19 Place Marguerite Perey, 91120 Palaiseau, France}}\\
\textsuperscript{2} {\normalsize\textit{Université Paris-Saclay, CNRS, CentraleSupélec, Laboratoire des signaux et systèmes, 91190 Gif-sur-Yvette, France}}}



\maketitle

\begin{abstract}
Reconfigurable Intelligent Surfaces (RIS) are considered a key enabler to achieve the vision of Smart Radio Environments, where the propagation environment can be programmed and controlled to enhance the efficiency of wireless systems. These surfaces correspond to planar sheets comprising a large number of small and low-cost reflecting elements whose parameters are adaptively selected with a programmable controller. Hence, by optimizing these coefficients, the information signals can be directed in a customized fashion. On the other hand, the initial access procedure used in 5G is beam sweeping, where the base station sequentially changes the active beam direction in order to scan all users in the cell. This conventional protocol results in an initial access latency. The aim of this paper is to minimize this delay by optimizing the activated beams in each timeslot, while leveraging the presence of the RIS in the network. The problem is formulated as a hard optimization problem. We propose an efficient solution based on jointly alternating optimization and Semi Definite Relaxation (SDR) techniques. Numerical results are provided to assess the superiority of our scheme as compared to conventional beam sweeping.  
\end{abstract}


\IEEEpeerreviewmaketitle

\section{Introduction}

\par To respond to the exponential surge for data traffic, the capacity of next-generation wireless networks needs to increase swiftly. A promising technology for 5G and Beyond wireless networks is the millimeter wave communication (mmWave) over the 30–300 GHz due to its potential high data rates thanks to their bandwidth wealth \cite{rangan2014millimeter}. To compensate the severe interference experienced on the congested and fragmented sub-6 GHz spectrum, large antenna arrays are used to form highly directional beams at these high frequencies. Thus, a principal characteristic of mmWave signals is their directionality, as they largely depend on the angles of departure and arrival (AoD/AoA) between the communicating devices. This high directivity is also their weakness, because it comes at the price of less resilience to signal blockages, and an intermittent channel quality. Accordingly, a massive amount of expensive antennas is packed at the base station (BS) to produce highly directional beams and channel gains \cite{HAC2016}.

\par Alternatively, reconfigurable intelligent surfaces (RIS), or intelligent reflecting surfaces (IRS), have been recently introduced as a wireless paradigm that can exploit engineered scattering surfaces to transmit and receive information \cite{liaskos2018new}. The core technology behind this new prototype is the meta-surface which is a planar array made of a large number of small scattering elements whose reflecting coefficients depend on their properties. Using low-cost fabrication techniques and integrated circuits, these properties can be easily controlled and each of these elements can then reflect a phase-shifted version of the impinging electromagnetic wave. Therefore, by optimizing the induced phase shifts, the radio propagation environment can be tailored to direct the signals towards the users, boosting the coverage area and avoiding blockages.

\par Deploying RIS in wireless systems has shown very promising advantages in terms of higher uplink and downlink data rates, and favorable energy consumption \cite{wu2019intelligent}. A missing aspect in the literature studies on RIS is their role in the control signaling share of wireless communications, for instance, the initial access process. In fact, 5G uses the technique of beam sweeping in the initial access procedure \cite{andrews2014will}. Beam sweeping is a transmission method used at the BS to determine suitable communications directions. During the initial access, the BS transmits cell-specific signals to the users while switching the beam direction in a sequential manner in order to scan the whole cell. These signals are synchronized, periodical, and include necessary initial access information to be conveyed. Consequently, while the BS shifts the beam direction at each timeslot, users in regions that are aligned with the active beams at each slot can access the network. The BS keeps on switching the beam direction in order to cover the whole cell, which causes an initial access latency.

\par This initial delay is due to the fact that the BS does not know where the users are located, and must figure the appropriate transmission directions. Added to that, since the mmWave links are very sensitive to any misalignment or blockage, the initial access protocol is repeated more frequently. Also, the presence of the RIS in the network brings new degrees of freedom in the channels that should be used to further optimize this initial procedure. 

\par In the literature, many aspects of beam sweeping have been studied. For example, \cite{barati2014directional} and \cite{barati2015directional} analyze a simple directional cell search process where the BS transmits activation signals in random directions to scan the angular space. The performance of the initial access strategy is shown to be dependent on the target $\SNR$ (Signal-to-Noise ratio) range in \cite{wei2017exhaustive}. The availability of cell context information at the BS was shown to reduce the search delay of analog beamforming in \cite{abbas2016context}. A heuristic weight-based beam sweeping algorithm to improve the cell search in mmWave 5G networks is given in \cite{perera2020initial}. In \cite{ly2021initial}, an initial access scheme is developed with the aim to minimize the initial access delay by optimizing the activated beams and regions at each slot. In this paper, we extend the previous framework of minimizing the initial access delay by considering the presence of RIS, which makes the problem much more challenging as, in addition to optimizing the beams, the phase shifts must also be properly optimized to enhance the performance of the initial access scheme.  

Our work can be seen as an extension to \cite{ly2021initial}, where the initial access delay is minimized by optimizing the activated beams and regions at each slot. The novel factor in our case is presence of the RIS that should be leveraged to achieve better results.

\par The remainder of this paper is organized as follows. In section \ref{sec:2}, the system model is provided and the initial access problem is formulated. This problem of interest is then elaborated and our proposed solution is explained in section \ref{sec:3}. Numerically evaluated simulation results are given in section \ref{sec:4}, while section \ref{sec:5} concludes the paper.

\section{System Model and Problem Formulation} \label{sec:2}
\subsection{System Model}
\par We consider the downlink of a mmWave wireless system, where the BS with $N_a$ antennas is seeking to activate all the single antenna users in the cell, by choosing the best beamforming directions, as shown in \autoref{fig:system}. Accordingly, the cell is partitioned into $Q$ non-overlapping regions, that are small enough to efficiently cover the whole cell. For each zone $q \in \{1,\dots,Q\}$, we denote by $w_q$ the expected number of users in this area. We assume that users in each region share the same second order channel statistics (SOCS); meaning that they are simultaneously covered or not covered by the active set of beams. We then define an indicator function for each $q \in \{1,\dots,Q\}$, $r_q = \mathds{1}_{\{\text{zone} \, q \, \text{is covered}\}}$. At each timeslot, the BS selects a group of beams from a fixed set of $N$ beams $\{\bm{b}_1, \dots, \bm{b}_N\}$, where $\bm{b}_n$ is the $n^{\text{th}}$ beam. We define $y_n$ as an indicator function for these beams as: $y_n = \mathds{1}_{\{\bm{b}_n \text{is chosen}\}}$. The RIS has $M$ reflecting elements and its position is fixed at the edge of the cell. Note that the expected number of users in all locations of the cell can be computed based on the users' distribution, usually known (e.g. \cite{li2016heterogeneous}).

\par We say that a certain zone is activated by the BS if the average received $\SNR$ is higher than a predetermined threshold $\tau\geq0$. For each region $q \in \{1,\dots,Q\}$, this is expressed as:
\begin{equation}
    \SNR_q = \frac{\sum_{n=1}^N \abs{\left(\bm{h}^\herm_{r,q} \, \bm{\Theta} \, \bm{G} + \bm{h}^\herm_{d,q}\right)
    \bm{b}_n}^2 \, y_n}{\sigma^2 \sum_{n=1}^N y_n}
    \geq \tau \, r_q
\end{equation}
where $\bm{h}^\herm_{d,q} \in \C^{1 \times N_a}, \bm{G} \in \C^{M \times N_a}$ and $\bm{h}^\herm_{r,q} \in \C^{1 \times M}$ represent the channels between the BS and zone $q$, the BS and the RIS, the RIS and zone $q$ respectively. The RIS reflection matrix $\bm{\Theta}$ is a diagonal matrix given by $\bm{\Theta} = \diag\left(\left[e^{j\theta_1}, \dots, e^{j\theta_M}\right]\right)$, where $\bm{\theta} = [\theta_1, \theta_2, \dots, \theta_M]^\trans \in [0, 2\pi]^M$ are the phase shifts of the $M$ elements. Also, $\sigma^2$ is the power of thermal noise, and $\sum_{n=1}^N y_n$ is a normalization factor since the BS can activate multiples beams at the same time.

\begin{figure}
    \centering
    \includegraphics[width=\columnwidth]{./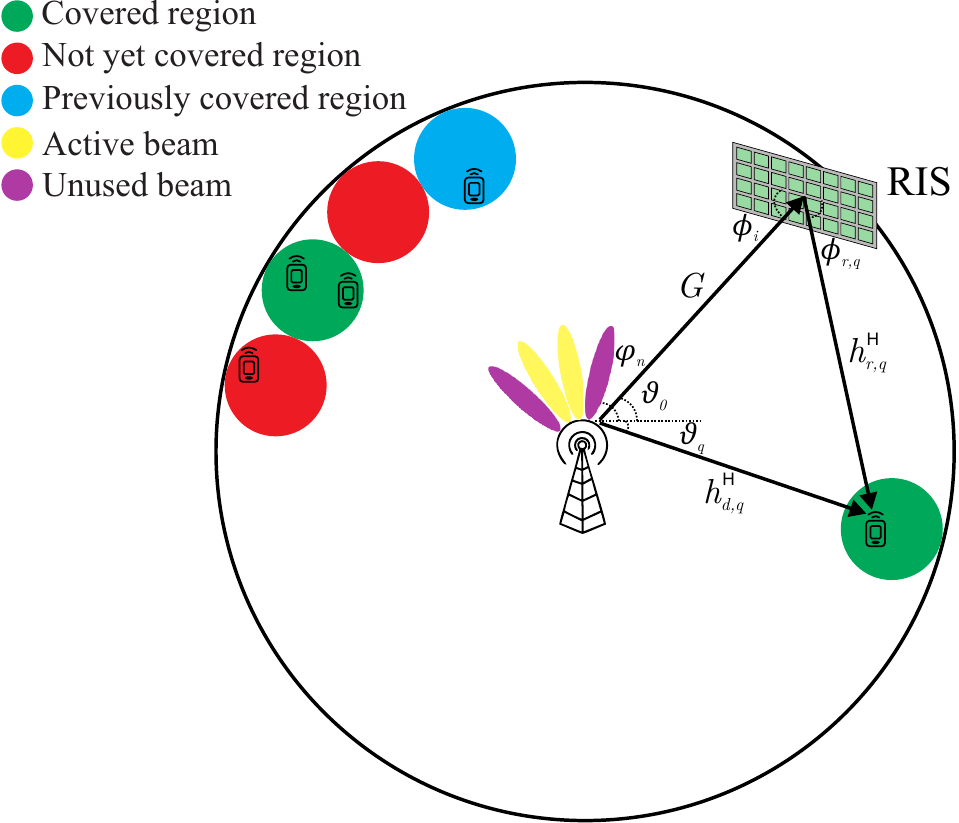}
    \caption{System Model}
    \label{fig:system}
\end{figure}

\par By carefully positioning the RIS, the mmWave propagation environment can be fully characterized by line-of-sight (LoS) channels. These LoS channels are modeled as the product of the array responses of the two nodes of the channel. For convenience, we define the steering vector function:
\begin{equation}
    \bm{a}(\phi, N) = \frac{1}{\sqrt{N}} \left[ 1, e^{j \pi \phi}, \dots, e^{j \pi (N-1) \phi}\right]^\trans
\end{equation}
where $N$ denotes the number of elements in the array and $\phi$ is the phase difference between the observations at two adjacent elements. One can easily see that $\bm{a}$ is a $2$-periodic function of $\phi$, and thus, we restrict $\phi$ to $[0,2)$ by replacing it with $\phi - 2\floor{\frac{\phi}{2}}$ when needed. We also adopt the distance dependent path loss model given by:
\begin{equation}
    L(d) = C_0 \left(\frac{d}{D_0}\right)^{-\alpha}
\end{equation}
where $C_0$ is the reference path loss at distance $D_0$, $\alpha$ is the decay exponent, and $d$ is the distance between the two nodes. We can now write the beams at the BS and the channels as:
\begin{align}
    \bm{b}_n &=&& \sqrt{P_t} \, \bm{a}\left(2\frac{d_a}{\lambda} \cos(\varphi_n), N_a\right) \\
    \bm{h}^\herm_{d,q} &=&& \sqrt{N_a \, L(d_q)} \; \bm{a}^\herm\left(2\frac{d_a}{\lambda} \cos(\vartheta_q), N_a\right) \\
    \bm{h}^\herm_{r,q} &=&& \sqrt{M \, L(d_{r,q})} \; \bm{a}^\herm\left(2\frac{d_m}{\lambda} \cos(\phi_{r,q}), M\right) \\
    \bm{G} &=&& \sqrt{M \, N_a \, L(d_0)} \nonumber \\ &&&\bm{a}\left(2\frac{d_m}{\lambda} \cos(\phi_i), M\right) \; \bm{a}^\herm\left(2\frac{d_a}{\lambda} \cos(\vartheta_0), N_a\right)
\end{align}
where $P_t$ is the BS transmit power, $\lambda$ is the carrier's wavelength, $d_a$ and $d_m$ are the respective antenna spacing at the BS and the RIS fixed at half-wavelength, $d_0$, $d_q$ and $d_{r,q}$ are the distances between the BS and the RIS, the BS and zone $q$, and the RIS and zone $q$ respectively. $\varphi_n$ is the physical AoD of the $n^\text{th}$ beam at the BS. $\vartheta_0, \vartheta_q$ and $\phi_{r,q}$ are the physical AoD from the BS to the RIS, from the BS to zone $q$, and from the RIS to zone $q$ respectively. $\phi_i$ is the physical AoA at the RIS from the BS. \autoref{fig:system} also displays these various angles.

\subsection{Problem Formulation}
\par While noting that the BS manages the RIS controller to set the phase shifts at each slot through a control link, the problem of initial access boils down to minimizing the number of timeslots needed to activate all the $Q$ regions. In other words, at each timeslot, the BS must find a group of beams and RIS phase shifts in order to raise the $\SNR$ of a maximum number of zones above the threshold. In the next slot, the BS repeats the process while disregarding previously covered areas. And hence, the optimization problem to solve at a given timeslot, is the following:
\begin{equation*}
    \label{P1}
    \begin{aligned}
        &\textrm{(P1)}& \quad & \underset{\{y_n\}, \{r_q\}, \bm{\theta}}{\textrm{maximize}} \quad && \sum_{q=1}^Q r_q w_q\\
        &&& \textrm{subject to} \quad && \SNR_q \geq \tau \, r_q &&\quad q=1, \dots, Q \\
        &&&&& 0 \leq \theta_i \leq 2 \pi &&\quad i=1, \dots, M  \\
        &&&&& r_q, \, y_n \in \{0, 1\} &&\quad q=1, \dots, Q, \\ &&&&&&&\quad n=1, \dots, N.
    \end{aligned}
\end{equation*}

\section{Proposed Solution} \label{sec:3}
\par The aforementioned problem is a mixed integer nonlinear program (MINLP), and thus falls in the class of NP-hard problems that cannot be optimally solved in polynomial time. In fact, the challenging factor is the joint optimization over the activated beams and regions, and the choice of suitable RIS phases. Therefore, our proposed solution is to alternately solve two problems: the first is to find the maximum number of regions that can be covered by a set of active beams, given the RIS phase shifts; and the second is to optimize the phase shifts to activate more regions.

\subsection{First sub-problem}
\par Indeed, the first sub-problem is the typical initial access problem for mmWave wireless networks without a RIS. This problem is similar to the one studied in \cite{ly2021initial} (Section III Problem P2). The only aspect that changes in our case is the more complex channel induced by the RIS phase shifts, which are fixed during the solution of this first problem.

\par We will directly show the solution for this problem. All derivation details are omitted for brevity, as they are similar to the ones in \cite{ly2021initial}. We start by defining $\bm{v} = [y_1, \dots, y_N, r_1, \dots, r_Q, 1]^\trans$ and $S=N+Q$. Let $\bm{V} = \bm{vv}^\trans$. The first sub-problem problem was shown to be equivalent to:
\begin{equation*}
    \label{P2}
    \begin{aligned}
        &\textrm{(P2)}& \quad & \underset{\bm{V}}{\textrm{maximize}} \quad && \tr(\bm{FV})\\
        &&& \textrm{subject to} \quad && \tr((\bm{A}_q-\bm{C}_q)\bm{V}) \geq 0 \quad q=1, \dots, Q\\
        &&&&& \tr(\bm{H}_p \bm{V}) = 0 \qquad\qquad p=1, \dots, S \\
        &&&&& \left[\bm{V}\right]_{S+1,S+1}=1, \\
        &&&&& \bm{V} \succeq \bm{0}, \; \rank\left(\bm{V}\right) = 1
    \end{aligned}
\end{equation*}
where
\begin{flalign}
    &\bm{A}_q = 
    \begin{bmatrix}
        \bm{0}_{S \times S} & & \bm{b}_q \\
        \bm{b}_q^\trans & \bm{0}_{1 \times Q} & \bm{0}_{Q+1 \times 1}
    \end{bmatrix},
    \\
    &\bm{b}_q = \frac{1}{2} \left[\abs{\bm{h}_q^\herm \, \bm{b}_1}^2, \dots, \abs{\bm{h}_q^\herm \, \bm{b}_N}^2\right]^\trans,
    \\
    &\bm{h}_q^\herm = \bm{h}^\herm_{r,q} \, \bm{\Theta} \, \bm{G} + \bm{h}^\herm_{d,q},
    \end{flalign}
\begin{align}
    &\begin{multlined}
        \left[\bm{C}_q\right]_{i,j} = \frac{\tau \sigma^2}{2} \Bigl(\mathds{1}_{\{i=N+q\}}\times\mathds{1}_{\{1\leq j \leq N\}} +\\
        \mathds{1}_{\{1 \leq i \leq N\}}\times\mathds{1}_{\{j=N+q\}}\Bigr),
    \end{multlined}
\\
    &\begin{multlined}
        \left[\bm{H}_p\right]_{i,j} = \Bigl(\mathds{1}_{\{i=p\}}\times\mathds{1}_{\{j=p\}}-\\ \frac{1}{2}\mathds{1}_{\{i=p\}}\times\mathds{1}_{\{j=S+1\}}-\frac{1}{2}\mathds{1}_{\{i=S+1\}}\times\mathds{1}_{\{j=p\}}\Bigr),
    \end{multlined}
    \\
    &\bm{F} = \diag\left(\left[\bm{0}_{1 \times N}, w_1, \dots, w_Q, 0\right]^\trans\right).
\end{align}

\par The obtained problem is again generally NP-hard, because of the non-convex rank constraint, whereas all other constraints are convex. The proposed solution is to use Semidefinite Relaxation (SDR) as follows: we drop the rank constraint and obtain a standard semidefinite program (SDP) that can be optimally solved, then we build a suboptimal rank-one solution to the original problem. If the optimal solution $\bm{V}^\star$ for the problem without rank constraint is rank-one, then it is also the optimal solution to \hyperref[P2]{(P2)}. Otherwise, we use Gaussian randomization \cite{ye1998semidefinite}, \cite{luo2007approximation} to approximate the optimal solution by generating random feasible solutions from $\bm{V}^\star$, and then recover their binariness using a threshold. \hyperref[alg: P2]{Algorithm 1} summarizes this method.

\begin{algorithm}
    \caption{SDR based solving approach for \hyperref[P2]{(P2)}}
    \label{alg: P2}
    \begin{algorithmic}
        \Require $\bm{F}$, phase shifts: $\bm{\theta}$, number of randomizations: $L_1$ 
        \State{Solve \hyperref[P2]{(P2)} without rank constraint optimally and denote $\bm{V}^\star$ the solution}
        \State{Extract the $S \times S$ left upper sub-matrix $\bm{V}^\prime$ from $\bm{V}^\star$}
        \For{i = $1$ \text{to} $L_1$}
            \State{Generate $\hat{\bm{v}}^{(i)} \sim \mathcal{N}\left(\bm{0}_{S \times 1},\bm{V}^\prime\right)$}
            \For{j = $1$ \text{to} $S$}
                \If{$\hat{\bm{v}}^{(i)}_j < 0$}
                    \State $\hat{\bm{v}}^{(i)}_j \gets 0$
                \Else \State $\hat{\bm{v}}^{(i)}_j \gets 1$
                \EndIf
            \EndFor
        \EndFor
        \State Among $L_1$ generated vectors $\hat{\bm{v}}$, find $\bm{v}$ that is feasible and maximizes $\tr(\bm{FV})$\\
        \hspace{-1em}\textbf{Output: $\bm{v}$}
    \end{algorithmic}
\end{algorithm}

\subsection{Second sub-problem}
\par The second sub-problem is the maximization over $\bm{\theta}$ of the weighted sum of the activated regions obtained from the first sub-problem, under $\SNR$ constraints. After obtaining $\bm{v}$, due to the independence between the phase shifts and the objective function, this problem is reduced to a feasibility check. We start by reformulating it as follows.

\par We define $\bm{t} = \left[e^{j\theta_1}, \dots, e^{j\theta_M}\right]^\herm$ and note $\bm{\Omega}_{q} = \diag(\bm{h}_{r,q}^\herm) \, \bm{G}$. We then have $\bm{h}^\herm_{r,q} \, \bm{\Theta} \, \bm{G} = \bm{t}^\herm \, \bm{\Omega}_{q}$, and the problem becomes:
\begin{equation*}
    \begin{aligned}
        &\textrm{Find} \quad && \bm{t}\\
        &\textrm{subject to} \quad && \sum_{n=1}^N y_n \abs{\left(\bm{t}^\herm \bm{\Omega}_q + \bm{h}^\herm_{d,q}\right)\bm{b}_n}^2 \geq \tau \sigma^2 r_q \sum_{n=1}^N y_n \quad \forall q\\
        &&& \abs{\bm{t}_i} = 1 \qquad\qquad i=1, \dots, M+1.
    \end{aligned}
\end{equation*}
We notice that the two constraints can be transformed to quadratic constraints as follows:
\begin{align}
\begin{split}
    &\abs{\left(\bm{h}^\herm_{r,q} \, \bm{\Theta} \, \bm{G} + \bm{h}^\herm_{d,q}\right)\bm{b}_n}^2
    = \abs{\left(\bm{t}^\herm \bm{\Omega}_q + \bm{h}^\herm_{d,q}\right)\bm{b}_n}^2 \\
    &\begin{multlined}
        = \bm{t}^\herm \, \bm{\Omega}_q \, \bm{b}_n \, \bm{b}_n^\herm \, \bm{\Omega}_q^\herm \, \bm{t}
        + \bm{t}^\herm \, \bm{\Omega}_q \, \bm{b}_n \, \bm{b}_n^\herm \, \bm{h}_{d,q} \\
        + \bm{b}_n^\herm \, \bm{\Omega}_q^\herm \, \bm{t} \, \bm{h}_{d,q}^\herm \, \bm{b}_n
        + \bm{b}_n^\herm \, \bm{h}_{d,q} \, \bm{h}_{d,q}^\herm \, \bm{b}_n
    \end{multlined}\\
    &= \overline{\bm{t}}^\herm \, \bm{Z}_{q,n} \, \overline{\bm{t}} = \tr\left(\bm{Z}_{q,n} \, \bm{T}\right)
\end{split}
\end{align}
where
\begin{equation}
    \bm{Z}_{q,n} = 
    \begin{bmatrix}
        \bm{\Omega}_q \, \bm{b}_n \, \bm{b}_n^\herm \, \bm{\Omega}_q^\herm
        & \bm{b}_n^\herm \, \bm{h}_{d,q} \, \bm{\Omega}_q \, \bm{b}_n \\
        \bm{h}_{d,q}^\herm \, \bm{b}_n \, \bm{b}_n^\herm \, \bm{\Omega}_q^\herm
        & \abs{\bm{h}_{d,q}^\herm \, \bm{b}_n}^2
    \end{bmatrix}
    \quad
    \overline{\bm{t}} = 
    \begin{bmatrix}
        \bm{t} \\ 1
    \end{bmatrix}
\end{equation}
and $\bm{T} = \overline{\bm{t}} \, \overline{\bm{t}}^\herm$. Now the problem can be rewritten as:
\begin{equation*}
    \label{P3}
    \begin{aligned}
        &\textrm{(P3)}& \quad &\textrm{Find} \quad && \bm{T}\\
        &&&\textrm{subject to} \quad && \sum_{n=1}^N y_n \tr\left(\bm{Z}_{q,n} \, \bm{T}\right) \geq \tau \sigma^2 r_q \sum_{n=1}^N y_n  \; \forall q\\
        &&&&& \left[\bm{T}\right]_{i,i} = 1 \qquad\qquad i=1, \dots, M+1 \\
        &&&&& \bm{T} \succeq \bm{0}, \; \rank\left(\bm{T}\right) = 1.
    \end{aligned}
\end{equation*}

\hyperref[P3]{(P3)} contains a hard non-convex rank constraint. Using SDR to relax this constraint, we obtain an SDP that can be optimally solved, and a feasible solution of \hyperref[P3]{(P3)} can be extracted from the higher rank solution as shown in \hyperref[alg: P3]{Algorithm 2} \cite{wu2018intelligent}.

\begin{algorithm}
    \caption{SDR based solving approach for \hyperref[P3]{(P3)}~/~\hyperref[P4]{(P4)}}
    \label{alg: P3}
    \begin{algorithmic}
        \Require $\bm{v}$, number of randomizations: $L_2$
        \State{Solve \hyperref[P3]{(P3)}~/~\hyperref[P4]{(P4)} without rank constraint optimally and denote $\bm{T}^\star$ the solution}
        \State{Compute the EVD: $\bm{T} = \bm{U \, \Sigma \, U}^\herm$ where $\bm{U}$ and $\bm{\Sigma}$ are a unitary and a diagonal matrix of size $M+1$}
        \For{i = $1$ \text{to} $L_2$}
            \State{Generate $\bm{r} \sim \mathcal{CN}\left(\bm{0}_{(M+1) \times 1}, \bm{I}_{M+1}\right)$}
            \State{$\hat{\bm{t}}^{(i)} \gets \bm{U \, \Sigma}^{1/2} \, \bm{r}$}
        \EndFor
        \State Among $L_2$ generated vectors $\hat{\bm{t}}$, find $\tilde{\bm{t}}$ that is feasible if \hyperref[P3]{(P3)} is considered (or that is feasible and maximizes the objective function in the case of \hyperref[P4]{(P4)})
        \For{j = $1$ \text{to} $M$}
            \State{$\bm{t}_j \gets \exp\left(j \arg \left( \frac{\tilde{\bm{t}}_j}{\tilde{\bm{t}}_{M+1}} \right)\right)$}
        \EndFor \\
        \hspace{-1em}\textbf{Output: $\bm{t}$}
    \end{algorithmic}
\end{algorithm}

Although solving \hyperref[P3]{(P3)} yields RIS phase shifts that satisfy the $\SNR$ constraint for the activated regions by the BS, this solution does not actually use the RIS to activate more regions than those covered without an RIS. And since this is the advantage of introducing metasurfaces, we go a step further and propose a new problem to solve in the second step to leverage the RIS for improving the activated zones in a timeslot. To do so, we introduce $Q$ auxiliary variables $\{\gamma_q\}$. Those variables represent the residual $\SNR$ in each inactive region. We propose now to maximize the weighted sum of these residual $\SNR$ variables as follows:
\begin{equation*}\begin{aligned}
    \label{P4}
        &\textrm{(P4)}\qquad\underset{\bm{T}, \{\gamma_q\}}{\textrm{maximize}}\qquad \sum_{q=1}^Q (1 - r_q) \, w_q \, \gamma_q\\
        &\textrm{s. t.}\quad \sum_{n=1}^N y_n \tr\left(\bm{Z}_{q,n} \, \bm{T}\right) \geq \tau \sigma^2 r_q \sum_{n=1}^N y_n + (1 - r_q) \, \gamma_q \quad\forall q\\
        &\qquad\;\;\left[\bm{T}\right]_{i,i} = 1 \qquad\qquad i=1, \dots, M+1 \\
        &\qquad\;\;\bm{T} \succeq \bm{0}, \; \rank\left(\bm{T}\right) = 1.
\end{aligned}\end{equation*}
The  idea behind this formulation is to find phase shifts for the RIS, that maximize the $\SNR$ for the inactive regions, while keeping the $\SNR$ level in the activated regions from the first step above the threshold. Also, both problems have the same feasible set of $\bm{T}$. A similar SDR-based solution can be obtained for this problem with a variation shown in \hyperref[alg: P3]{Algorithm 2}.

\subsection{Alternating Optimization}
\par Now we can finally present \hyperref[alg: alt]{Algorithm 3} to solve \hyperref[P1]{(P1)}.
\begin{algorithm}
    \caption{Alternating Optimization}
    \label{alg: alt}
    \begin{algorithmic}
        \Require Initial phase shifts $\bm{\theta}^1$
        \State Set iteration index $r \gets 1$ 
        \Repeat{
        \\ \hspace{1em} Solve \hyperref[P2]{(P2)} for a given $\bm{\theta}^r$, without rank constraint, 
        \\ \hspace{1.5em} using \hyperref[alg: P2]{Algorithm 1} to get $\bm{v}^r$
        \\ \hspace{1em} Solve \hyperref[P3]{(P3)}~/~\hyperref[P4]{(P4)} for a given $\bm{v}^r$, without rank constraint,
        \\ \hspace{1.5em} using \hyperref[alg: P3]{Algorithm 2} to get $\bm{\theta}^{r+1}$
        \\ \hspace{1em} Update $r \gets r+1$}
        \Until{Convergence or a stopping criterion is reached}
    \end{algorithmic}
\end{algorithm}

\par After presenting the proposed two-stage method, the last step would be to prove that this algorithm converges to a solution. This is ensured by the following proposition.

\begin{proposition}
\hyperref[alg: alt]{Algorithm 3} monotonically increases the objective value of \hyperref[P1]{(P1)} and converges.
\end{proposition}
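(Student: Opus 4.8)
\emph{Proof plan.} I would run the classical monotone block-coordinate argument, the only point requiring care being that the phase-shift update never cancels the coverage secured by the beam update. Throughout, write the \hyperref[P1]{(P1)} objective as $f(\{r_q\})=\sum_{q=1}^{Q}r_q w_q$, and for a fixed phase vector $\bm\theta$ let $\mathcal{C}(\bm\theta)$ denote the set of binary assignments $(\{y_n\},\{r_q\})$ satisfying $\SNR_q\ge\tau r_q$ for every $q$; this is precisely the feasible set (in the rank-one variables) of the first sub-problem \hyperref[P2]{(P2)}. The first fact to record is that \hyperref[alg: P2]{Algorithm 1} returns a feasible pair $\bm v^{r}\leftrightarrow(\{y_n^{r}\},\{r_q^{r}\})\in\mathcal{C}(\bm\theta^{r})$ whose \hyperref[P1]{(P1)} objective is no smaller than that of any feasible point placed into its randomization pool.

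Next I would examine the phase-shift step. In either \hyperref[P3]{(P3)} or \hyperref[P4]{(P4)}, the constraint attached to an already-activated region $q$ (one with $r_q^{r}=1$) reads $\sum_{n=1}^{N}y_n^{r}\tr(\bm{Z}_{q,n}\bm{T})\ge\tau\sigma^{2}\sum_{n=1}^{N}y_n^{r}$, i.e.\ exactly $\SNR_q\ge\tau$ at the phases encoded by $\bm{T}$, while the constraint for an inactive region is automatically satisfied. Since the rank-one matrix built from the incoming phases $\bm\theta^{r}$ is one such feasible point, the relaxed SDP is feasible and \hyperref[alg: P3]{Algorithm 2} delivers $\bm\theta^{r+1}$ with $(\{y_n^{r}\},\{r_q^{r}\})\in\mathcal{C}(\bm\theta^{r+1})$. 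Hence, when \hyperref[P2]{(P2)} is solved again at iteration $r+1$ with $\bm\theta^{r+1}$, the pair $(\{y_n^{r}\},\{r_q^{r}\})$ is available as a feasible warm start, so by the first fact $f(\{r_q^{r+1}\})\ge f(\{r_q^{r}\})$; iterating this inequality shows $\{f(\{r_q^{r}\})\}_{r\ge1}$ is non-decreasing. Convergence then follows because $f$ is bounded above by $\sum_{q=1}^{Q}w_q<\infty$, and in fact, since each $r_q\in\{0,1\}$, $f$ takes values in the finite set $\{\sum_{q\in\mathcal{S}}w_q:\mathcal{S}\subseteq\{1,\dots,Q\}\}$, so the objective is constant after finitely many iterations, which is the stopping condition of \hyperref[alg: alt]{Algorithm 3}.

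The step I expect to be the main obstacle is making the ``no smaller than any warm-start feasible point'' property actually hold, since neither \hyperref[alg: P2]{Algorithm 1} nor \hyperref[alg: P3]{Algorithm 2} solves its sub-problem to optimality: both use SDR followed by Gaussian randomization and rounding, which a priori could output a feasible point that is \emph{worse} than the previous iterate, or even fail to output a feasible point at all. The fix I would adopt is to seed the candidate pool of each randomization with the current incumbent (the previous $\bm v$, respectively the previous $\bm\theta$); this costs nothing, guarantees that a feasible candidate always exists, and makes the monotone chain above go through. With that convention the remaining steps -- boundedness, finiteness of the value set, and the equivalence derivations already borrowed from \cite{ly2021initial} -- are routine.
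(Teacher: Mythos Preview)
Your argument is the same monotone block--coordinate scheme the paper uses, and your proposal is correct. The paper's own proof is considerably terser: it simply writes $f(\bm v^{r+1},\bm\theta^{r+1})\ge f(\bm v^{r},\bm\theta^{r+1})=f(\bm v^{r},\bm\theta^{r})$, justifying the inequality by ``$\bm v^{r+1}$ is the optimal solution given $\bm\theta^{r+1}$'' and the equality by the observation that $f$ does not depend on $\bm\theta$ at all. In other words, where you spend effort checking that the phase update preserves feasibility of $(\{y_n^r\},\{r_q^r\})$, the paper short--circuits this by noting the objective is literally a function of $\bm v$ alone; both are valid, but the paper's version implicitly still needs $\bm v^r$ to be feasible at $\bm\theta^{r+1}$ for the comparison in (a) to make sense, which is exactly the point you spell out. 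Your additional discussion of the SDR/randomization sub--optimality, and the incumbent--seeding fix, is something the paper does not address --- it simply asserts optimality of each sub--problem solution --- so on that issue you are in fact more careful than the source.
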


\begin{proof}
The objective function $f(\bm{v}, \bm{\theta}) = \sum_{q=1}^Q r_q w_q$ is non-decreasing with the iteration of this algorithm, since at any consecutive iterations $r$ and $r+1$, we have:
\begin{equation}
    f(\bm{v}^{r+1}, \bm{\theta}^{r+1}) \overset{\text{(a)}}{\geq} f(\bm{v}^{r}, \bm{\theta}^{r+1}) \overset{\text{(b)}}{=} f(\bm{v}^{r}, \bm{\theta}^{r})
\end{equation}
where (a) results from $\bm{v}^{r+1}$ being the optimal solution given $\bm{\theta}^{r+1}$ and (b) holds since $f$ is independent of $\bm{\theta}$.
\end{proof}

Finally, the proposed initial access algorithm is given in the following \hyperref[alg:acc]{Algorithm 4}.
\begin{algorithm}
    \caption{Proposed initial access with RIS}
    \label{alg:acc}
    \begin{algorithmic}
        \Require $\bm{F}$
        \State Set timeslot $\gets 0$ 
        \While{$\tr\left(\bm{F}\right) > 0$}
        \State Solve \hyperref[P1]{(P1)} using \hyperref[alg: alt]{Algorithm 3}
        \For{q = 1 \text{to} Q}
            \If{$q^{\text{th}}$ zone activated}
            $[\bm{F}]_{q,q} \gets 0$
            \EndIf
        \EndFor
        \State Increase timeslot $\gets$ timeslot $+1$
        \EndWhile \\
        \hspace{-1em}\textbf{Output: }timeslot
    \end{algorithmic}
\end{algorithm}

\section{Numerical Results} \label{sec:4}
\par In this section, we report some computer simulation results to illustrate the efficiency of the proposed scheme. We consider a circular cell populated by $100$ users and divided into $Q = 40$ regions. The viability of mmWave cellular up to cell radii on the order of hundreds of meters is shown in \cite{andrews2016modeling}, and peak data rate was obtained over a range up to almost $2$ km in LoS scenario. We thus confidently pick the cell radius as $1$ km. The basestation is equipped with $N=N_a=64$ antennas and the RIS has $M=64$ elements. The transmit power is fixed at $P_t=1$ W, the reference path loss $C_0=-30$ dB, the decay exponent $\alpha=2.75$ and the noise power $\sigma^2=-85$ dBm.

\begin{figure}[h]
    \centering
    \includegraphics[width=8.5cm, height=6.5cm]{./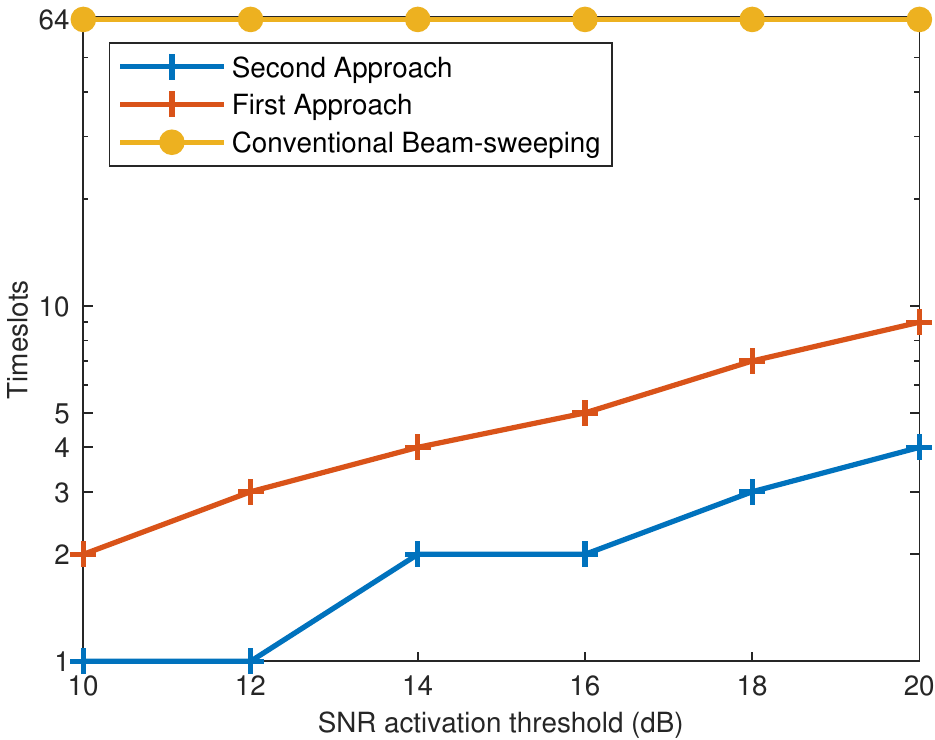}
    \caption{Initial access simulation results.}
    \label{fig:sim_ini}
\end{figure}

\autoref{fig:sim_ini} shows the number of timeslots needed to activate all users as a function of the $\SNR$ threshold that varies from $10$ to $20$ dB. The first and second approaches correspond to solving \hyperref[P3]{(P3)} and \hyperref[P4]{(P4)} in the second step, as discussed in the previous section. Clearly the second one provides less latency since it activates more zones in each slot compared to the first one. Also note that both outperform the conventional beam sweeping technique (in absence of RIS) that requires a number of timeslots equal to the number of BS beams (64 in this case), typically very large. 

\section{Conclusion} \label{sec:5}
\par In this paper, we provided an initial access scheme for RIS-assisted wireless systems with the goal of minimizing the required number of timeslots for the BS to cover all users in the cell. By using alternating optimization, the problem of interest is formulated and cut into two sub-problems. The solution to the first sub-problem consists in finding the beams for given RIS phase shifts, while an SDR-based solution is provided for the second sub-problem to optimize the RIS phase shifts. Simulation results show a significant improvement in the number of slots necessary for initial access compared to the traditional beam sweeping.


\bibliographystyle{IEEEtran}

\end{document}